\documentclass[a4paper,UKenglish]{lipics-v2018-nopage}
\bibliographystyle{plainurl}
\usepackage{microtype}
\usepackage{tikz}
\usepackage{verbatim}
\usepackage{tkz-graph}
\usepackage{mathabx} 
\usepackage{enumitem} 
\usetikzlibrary{positioning,decorations.pathreplacing}

 

\newcommand{\ket}[1]{\vert #1 \rangle}

\newcommand{\mybar}[1]{\lambda}

\newcommand{\Gg}{\mathscr{G}}

\newcommand{\Pp}{\mathscr{P}}
\newcommand{\suppP}{\Lambda}
\newcommand{\distrP}{\Gamma}

\newenvironment{proof-sketch}{\trivlist\item[]\emph{Brief proof sketch}:}%
{\unskip\nobreak\hskip 1em plus 1fil\nobreak$\Box$
\parfillskip=0pt%
\endtrivlist}
\theoremstyle{plain}
\newtheorem{proposition}[theorem]{Proposition}
\newtheorem{claim}[theorem]{Claim}
\newtheorem{process}{Process}

\title{Quantum Advantage for the LOCAL Model in Distributed Computing}
\titlerunning{Quantum Advantage for the LOCAL Model in Distributed Computing}

\author{Fran{\c c}ois Le Gall}{Graduate School of Informatics, Kyoto University\\
Yoshida-Honmachi, Sakyo-ku, Kyoto 606-8501, Japan
}{}{}{}

\author{Harumichi Nishimura}{Graduate School of Informatics, Nagoya University\\
Chikusa-ku, Nagoya, Aichi 464-8601, Japan 
}{}{}{}

\author{Ansis Rosmanis}{Centre for Quantum Technologies, National University of Singapore\\
Block S15, 3 Science Drive 2, 117543, Singapore
}{}{}{}

\authorrunning{F. Le Gall, H. Nishimura and A. Rosmanis}

\Copyright{Fran{\c c}ois Le Gall, Harumichi Nishimura and Ansis Rosmanis
}

\subjclass{\ccsdesc[500]{Theory of computation~Quantum computation theory}}

\keywords{Quantum computing, distributed computing, LOCAL model}

\category{}

\relatedversion{}

\supplement{}

\funding{}


\nolinenumbers 
\hideLIPIcs  

\begin{document}

\maketitle

\begin{abstract}
There are two central models considered in (fault-free synchronous) distributed computing: the CONGEST model, in which communication channels have limited bandwidth, and the LOCAL model, in which communication channels have unlimited bandwidth. Very recently, Le Gall and Magniez (PODC 2018) showed the superiority of quantum distributed computing over classical distributed computing in the CONGEST model. In this work we show the superiority of quantum distributed computing in the LOCAL model: we exhibit a computational task that can be solved in a constant number of rounds in the quantum setting but requires $\Omega(n)$ rounds in the classical setting, where $n$ denotes the size of the network.
 \end{abstract}

\section{Introduction}

\noindent
{\bf Classical distributed computing.} A central topic in distributed computing is the study of synchronous network algorithms. Here processors and communication channels are modeled using nodes and edges, respectively, and executions proceed with round-based synchrony, where each node can transfer one message to each adjacent node per round. The main quantity of interest is typically the number of rounds needed to solve a computational task. Two fundamental models considered in the literature are the LOCAL model, introduced by Linial \cite{LinialFOCS87,LinialSICOMP92}, and the CONGEST model, introduced by Peleg \cite{Peleg00}. 

The LOCAL model does not put any limitation on the size of the messages sent at each round, and thus mainly characterizes the locality of the problem considered and the hardness of breaking symmetry between nodes. Obviously all computational problems can be solved with $O(D)$ rounds in the LOCAL model, where $D$ is the diameter of the network, by first collecting all the information about the network (including the inputs of all nodes) at some node. Many important problems have significantly more efficient algorithms---we refer to~\cite{Peleg00} for examples and to \cite{Chang+FOCS17} for a recent classification. 

In the CONGEST model, on the other hand, each message has restricted length (the length is typically restricted to $O(\log n)$ bits, where $n$ is the number of nodes in the network). This corresponds to the situation of communication channels with limited bandwidth, in which case congestions can arise. A simple example showing the striking difference between these two models is deciding whether the diameter of the network is 2 or 3. This problem requires $\Theta(n)$ rounds in the CONGEST model \cite{Frischknecht+SODA12,Holzer+PODC12,Peleg+ICALP12}, while in the LOCAL model it can be trivially solved with a constant number of rounds.\vspace{2mm}

\noindent
{\bf Quantum distributed computing.} 
Quantum versions of both models can be naturally defined by replacing classical channels by quantum channels between the processors (which are now quantum processors, i.e., processors that can process quantum information). Gavoille, Kosowski and Markiewicz~\cite{Gavoille+DISC09} first considered quantum distributed computing in the LOCAL model, and showed that for several fundamental problems, such as Graph Coloring or Maximal Independent Set, allowing quantum communication cannot lead to any significant advantage. More recently, Arfaoui and Fraigniaud \cite{Arfaoui+14} observed that several lower bound techniques for the classical LOCAL model hold in the quantum model as well. 

The power of distributed network computation in the CONGEST model, where each node can send $O(\log n)$ qubits per round to each neighbor, has been first investigated by Elkin, Klauck, Nanongkai and Pandurangan~\cite{Elkin+PODC14}. The main conclusions reached in that paper were that for many fundamental problems in distributed computing, such as computing minimum spanning trees or minimum cuts, quantum communication does not, again, offer significant advantages over classical communication. Recently, Le Gall and Magniez nevertheless showed the superiority of quantum distributed computing in the CONGEST model for a concrete problem \cite{LeGall+PODC18}: they showed that the diameter of the network can be computed in $\tilde O(\sqrt{nD})$ rounds in the quantum setting, where $n$ is the number of nodes of the network and $D$ is the diameter of the network. In comparison, as mentioned above $\Omega(n)$ rounds are necessary in the classical setting even if $D$ is constant.  It should be mentioned that from a purely complexity-theoretic perspective most known separations between two-party classical and quantum communication complexities (e.g., separations in the bounded-error setting for the disjointness function \cite{Aaronson+05,Buhrman+STOC98, Hoyer+STACS02}) can be converted in a straightforward way into similar separations in the CONGEST model. The contribution of \cite{LeGall+PODC18} is actually to give a separation for an important problem in distributed computing.

A pressing open question is to understand whether a similar quantum speedup in distributed computing can be obtained in the LOCAL model. The only known gap is a factor~2: for each integer $t\ge 1$, Gavoille, Kosowski and Markiewicz~\cite{Gavoille+DISC09} constructed a computational problem (inspired by the work by Greenberger, Horne and Zeilinger \cite{Greenberger+89}) that can be solved in~$t$ rounds in the quantum setting but requires $2t$ rounds in the classical setting.\footnote{A much larger gap is shown in~\cite{Gavoille+DISC09} for the setting where the nodes of the network initially share a globally entangled state. In the present paper, however, we consider the arguably more natural setting where no prior entanglement is allowed.} The quantum upper bound comes from the observation that $t$ rounds are enough to create entanglement between two nodes at distance $2t$ from each other. In this perspective, as mentioned in \cite{Gavoille+DISC09}, the speed-up factor of 2 may ``look like a natural limit''. Note that, contrary to the CONGEST model, known separations between two-party (or multiparty) quantum and classical communication complexities seem meaningless to prove separations in the LOCAL model due to the unlimited bandwidth between nodes. \vspace{2mm}

\noindent
{\bf Our results.}
In this work we show the existence of a large gap between the round complexities of quantum and classical distributed computation in the LOCAL model.
\begin{theorem}\label{th1}
There exists a computational problem that can be solved with 2 rounds in the quantum LOCAL model, but requires $\Omega(n)$ rounds in the classical LOCAL model, where~$n$ denotes the number of nodes in the network. The classical lower bound holds even if arbitrary prior randomness is allowed.
\end{theorem}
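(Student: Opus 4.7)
The plan is to design a computational task that embeds a genuine nonlocal game of GHZ (Greenberger--Horne--Zeilinger) type into a labelled graph family $G_n$. Such games admit a perfect quantum strategy based on local measurements on a shared multipartite entangled state, while every local hidden variable strategy fails on some input. This gap between the quantum winning probability~$1$ and the classical bound strictly less than~$1$ will drive both the upper and the lower bound.

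For the quantum direction I would arrange a constant number of ``player'' nodes together with a ``hub'' node in $G_n$ so that every player lies within distance~$2$ of the hub. In round~$1$ the hub prepares a GHZ state and sends one qubit along each outgoing edge leading to a player; in round~$2$ the qubits reach the players, each of which performs a basis measurement determined by its own locally held input bit and outputs the outcome. By the standard GHZ argument the outputs satisfy the winning relation with probability~$1$, giving a 2-round quantum LOCAL algorithm.

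For the classical direction I would use a standard indistinguishability argument combined with the classical GHZ bound. In $r$ rounds of classical LOCAL, the bit output at a node is a function only of its $r$-neighborhood in the labelled input graph together with the shared random bits. The graph $G_n$ is engineered so that for $r = o(n)$ the $r$-neighborhoods at the player nodes are consistent with several input configurations that would demand different correct outputs under the GHZ rule. Consequently any $o(n)$-round randomized classical protocol induces, on these players, a local hidden variable strategy for the embedded GHZ game; averaging over the shared randomness yields a deterministic such strategy that would have to win on every input, contradicting the nonlocality theorem.

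The main obstacle is the design of $G_n$ itself: it must place the players close enough for a common hub to distribute entanglement in 2 quantum rounds, while simultaneously being ``classically opaque'' in the sense that the information distinguishing input configurations is packed into parts of the graph reachable only in $\Omega(n)$ rounds. A natural route is to let the players' game-roles (rather than their local input bits) be determined by distant global features of $G_n$---for instance, by embedding the GHZ gadget into a long symmetric structure such as a path or cycle in which all short neighborhoods look alike, so that each node's $o(n)$-neighborhood is consistent with many candidate positions. The quantum protocol, which executes from the hub without needing to recognise these global features, is unaffected; any short-round classical protocol, in contrast, is forced into the local hidden variable regime. Proving that this engineered ambiguity yields a bona-fide LHV reduction---rather than merely a winning probability that degrades gracefully with $r$---is the most delicate part of the argument.
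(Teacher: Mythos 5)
Your proposal shares the paper's starting point---a GHZ-type nonlocality argument drives the separation---but the way you propose to embed the game into the network has a fatal flaw. You place a constant number of player nodes within distance~$2$ of a common hub so that the hub can distribute a GHZ state in two rounds. But then the players are within distance~$4$ of one another, and in the classical LOCAL model (unbounded message size) a constant number of rounds suffices for them to pool all of their inputs and their entire constant-radius neighborhoods; any winning condition determined by the players' inputs is then trivially satisfiable classically in $O(1)$ rounds, so no $\Omega(n)$ lower bound can hold. Your proposed fix---making the players' roles depend on distant global features of the graph---does not escape this: either the information needed to choose a measurement basis (role and input bit) is visible within radius~$2$, in which case it is also visible to the classical players who then pool it, or it is not, in which case your 2-round quantum protocol cannot select the correct measurement bases either. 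This hub-based design is essentially the construction of Gavoille, Kosowski and Markiewicz, which the paper cites as yielding only a factor-$2$ separation ($t$ quantum rounds versus $2t$ classical rounds), precisely because the reach of entanglement distribution from a central point is itself bounded by the round count.

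The paper escapes this trap by making the \emph{output}, not the entanglement source, global. The network is a ring of $3d$ nodes (plus three pendant input nodes at the corners), every ring node produces an output bit, and the validity condition couples XORs of outputs taken over arcs of length $\Theta(n)$ (the bits $m_E, m_R, m_B, m_L$) with the three inputs $b_0,b_1,b_2$ sitting at mutually distant corners. The required entangled state is the \emph{graph state} of the ring, which---unlike a GHZ state radiating from a hub---can be prepared in $2$ rounds using only nearest-neighbor interactions (Theorem~\ref{th:graph-state}), after which local measurements reproduce the nonlocal correlations of Proposition~\ref{proposition:cond}. For the lower bound, the key structural fact is that in $T\le d/2$ rounds each ring node can see at most \emph{one} of the three inputs, so each arc-XOR $q_R,q_B,q_L$ is an \emph{affine} function of the two inputs adjacent to that arc; Lemma~\ref{lemma:affine} shows no such affine assignment can satisfy all four GHZ-type constraints, and a counting argument over the fixed randomness plus a $k$-fold repetition drives the success probability below any constant. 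If you want to salvage your approach, you would need to replace the hub by a locally preparable many-body entangled state and spread the outputs over $\Omega(n)$ nodes so that the winning relation itself is nonlocal---which is exactly the construction of Barrett et al.\ and Bravyi--Gosset--K\"onig that the paper adopts.
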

The computational problem we construct to prove Theorem \ref{th1} is inspired by a construction from \cite{Barrett+PRA07}, which was initially used to show the non-locality of measurement outcomes of graph states. The same construction was recently also used by Bravyi, Gosset and K\"onig~\cite{Bravyi+17} to prove their separation between quantum and classical constant-depth circuit complexities. The problem, defined in Section \ref{sec:RelSep}, can be informally described as follows: on an $n$-node ring, the nodes should output one of the possible outcomes that arise when measuring the graph state corresponding to the ring in a basis depending on the input each node receives. We are currently not aware of any applications of Theorem \ref{th1} for constructing quantum algorithms for problems of interest to the distributed computing community, but nevertheless consider this result as a valuable proof of concept showing that the quantum LOCAL model can be arbitrarily more powerful than the classical LOCAL model. 

The computational problem considered in Theorem \ref{th1} is a relation (i.e., for each input there are multiple valid outputs). It is fairly easy to show that for any function (i.e., for each input there is only one valid output at each node) the quantum and classical round complexities are equal in the LOCAL model: we give a proof of this property in Appendix~\ref{sec:functions}. We then investigate whether a separation similar to the separation of Theorem~\ref{th1} can be obtained for a computational problem without input. Such kinds of computational problems (seen as sampling problems or computations of probability distributions) are the main targets of the field of quantum supremacy (see \cite{Arrow+Nature17} for a recent survey). Indeed, a major open problem left in the work by Bravyi, Gosset and K\"onig \cite{Bravyi+17} mentioned above is to prove the superiority of constant-depth quantum circuits for the computation of a probability distribution. We show that in the LOCAL model of distributed computing such a goal can be achieved.  

\begin{theorem}\label{th2}
There exists a sampling problem that can be solved with 2 rounds in the quantum LOCAL model, but requires $\Omega(n)$ rounds in the classical LOCAL model. The classical lower bound holds even for constant-error additive approximation.
\end{theorem}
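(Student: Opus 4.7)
The plan is to derive Theorem~\ref{th2} from Theorem~\ref{th1} by bundling the inputs of the relation problem into the output of a no-input sampling problem. Let $R$ be the relation from Theorem~\ref{th1}, let $\mathcal{Q}$ be its $2$-round quantum LOCAL algorithm, and let $\mu$ be the input distribution under which the classical lower bound of Theorem~\ref{th1} holds (I would first check, or slightly strengthen the proof of that theorem, to ensure its $\Omega(n)$ bound survives against randomized classical protocols on a uniformly random valid input). The sampling problem $S$ to exhibit asks each node $v$ to output a pair $(x_v,y_v)$ so that the joint distribution over $(x,y)\in\{0,1\}^n\times\{0,1\}^n$ is additively close to $D(x,y)=\mu(x)\cdot\Pr[\mathcal{Q}(x)=y]$.

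For the quantum upper bound I would have the nodes first sample $x\sim\mu$ and then execute $\mathcal{Q}$, each outputting the pair $(x_v,y_v)$. When $\mu$ is a product distribution the $x_v$ are drawn from private coins with no communication; when $\mu$ has a mild global constraint (such as the parity-type constraint present in the Barrett et al.~\cite{Barrett+PRA07}/Bravyi--Gosset--K\"onig~\cite{Bravyi+17} construction underlying Theorem~\ref{th1}) the constrained sample can be obtained by measuring an appropriate GHZ-type state preparable on the ring in $O(1)$ rounds. Either way the protocol fits within the $2$ rounds already used by $\mathcal{Q}$.

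For the classical lower bound, suppose for contradiction that some classical LOCAL protocol $\mathcal{B}$ runs in $r=o(n)$ rounds and produces a distribution $\tilde D$ with $\|\tilde D-D\|_{\mathrm{TV}}\le\epsilon$ for a small constant $\epsilon$. Writing $\tilde D(x,y)=\tilde D_x(x)\,\tilde D_{y\mid x}(y\mid x)$ and using that $D_x=\mu$, an averaging argument yields a set $X^\ast$ of valid inputs with $\mu(X^\ast)\ge 1-O(\sqrt\epsilon)$ such that, for every $x\in X^\ast$, the conditional $\tilde D_{y\mid x}$ places probability at least $1-O(\sqrt\epsilon)$ on outputs valid for $R$ on input $x$. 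If this conditional could be realised by an $r$-round classical protocol, it would solve $R$ with constant probability on a random input from $\mu$ in $o(n)$ rounds, contradicting Theorem~\ref{th1}.

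The main obstacle is exactly this last step, because conditioning $\mathcal{B}$ on the global event ``the sampled input equals the prescribed input $x$'' is not a LOCAL operation, so the reduction does not automatically produce a true $r$-round protocol for $R$. I foresee two ways past this. The first is to exploit the algebraic symmetry of the Barrett et al.~construction: the relation $R$ is preserved under a suitable simultaneous shift of inputs and outputs, so a sample $(x',y')$ from $\mathcal{B}$ can be converted into a valid answer for any target input $x$ by each node applying a function of $x_v$ and $x'_v$ alone, requiring no extra communication. The second, which is my preferred fallback, is to re-prove the classical lower bound directly in the sampling setting by an indistinguishability argument along the ring: in $r=o(n)$ rounds the marginal of $\tilde D$ on any arc of length $n/2-2r$ is independent of the state of the opposite arc, so the non-local correlations inherent in $D$ (arising from measurements on the $n$-cycle graph state) cannot be reproduced within additive error $\epsilon$, yielding the $\Omega(n)$ lower bound directly and bypassing Theorem~\ref{th1}.
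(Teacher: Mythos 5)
Your problem definition and quantum upper bound essentially coincide with the paper's: the paper also folds the inputs into the output (the nodes $w_0,w_1,w_2$ flip private coins $b_0,b_1,b_2$, the network runs Process~\ref{proc:rel}, and the $w_i$ output their coins), and since the input distribution is uniform on $\{0,1\}^3$ it is a product distribution, so your worry about GHZ-assisted sampling does not arise. The gap is in the classical lower bound. You correctly identify the central obstacle---conditioning a sampling protocol on the event ``the internally sampled input equals $x$'' is not a LOCAL operation---but neither of your two escape routes closes it. Route (a) rests on an unproved symmetry claim: the inputs live only at the three degree-one nodes, while validity is a system of \emph{global} parity constraints over all $3d$ ring nodes whose form depends on all three bits $(b_0,b_1,b_2)$ jointly; a correction that each node computes ``from $x_v$ and $x'_v$ alone'' gives the overwhelming majority of nodes no information whatsoever, and no local classical automorphism of the relation of Proposition~\ref{proposition:cond} with the required properties is exhibited (nor is one apparent). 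Route (b) as stated also fails: independence of opposite arcs is not the obstruction, since the trivial protocol in which every node outputs $0$ already makes all global parities deterministic. The true obstruction is that the input-dependent system of parity constraints cannot be satisfied by affine functions with the prescribed locality (Lemma~\ref{lemma:affine}), and your sketch never brings that lemma to bear on the sampling setting.

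What the paper does instead---and what your reduction is missing---is a direct analysis of where the randomness that determines the ``input'' bits lives. Each $w_i$'s output bit $b_i$ is a function of randomness available within distance $T$ of $w_i$; one decomposes that randomness as $(Q_i^0,Q_i^1,B_i)$, where $B_i$ is a Bernoulli variable equal to the output bit and $Q_i^\beta$ is the randomness conditioned on $B_i=\beta$, and then hands \emph{all} randomness except $B_0,B_1,B_2$ to every node as globally shared randomness. This only strengthens the classical model, yet now, after $T\le d/4$ rounds, the value $B_i$ can have reached only nodes within distance $2T-1<d/2$ of $v_{di}$, so (fixing the global randomness) each output parity $q_E,q_R,q_B,q_L$ is an affine function of the $B_i$ with exactly the locality structure required by Lemma~\ref{lemma:affine}. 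Combined with the observation that either some marginal $\Pr[b_i=1]$ already deviates by more than $1/11$ from $1/2$, or every triple $(b_0,b_1,b_2)$ occurs with probability at least $(5/11)^3>1/11$, this yields the $1/11$ total-variation lower bound. Your proposal would need this randomness-decomposition step (or a genuinely worked-out substitute) to be a proof.
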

Theorem \ref{th2} is proved by considering the same computational problem as used in Theorem~\ref{th1} but replacing the inputs by random bits. The proof nevertheless requires several adjustments, in particular a careful analysis of the classical randomness shared during the execution of the protocol. \vspace{2mm}

\noindent
{\bf Other relevant works.} It is well known that quantum communication can offer significant advantages over classical communication in several settings such as communication complexity or quantum games (see, e.g.,~\cite{Broadbent+08,Denchev+08,deWolf02}). Concerning problems of interest to the distributed computing community, the main works not already mentioned are quantum algorithms for byzantine agreements \cite{Ben-Or+STOC05} and for distributed computing over anonymous networks, and in particular the design of zero-error quantum algorithms for leader election \cite{Tani+12} (see also \cite{Denchev+08}).


\section{Preliminaries}\label{sec:prelim}
\subsection{Notations and definitions}\label{subsec:notation}
{\bf Quantum gates.}
We assume that the reader is familiar with the basis of quantum computation and refer to~\cite{Nielsen+00} for a standard reference.
We will use the Hadamard gate $\mathrm{H}$ and the phase gate $\mathrm{S}$ acting on one qubit: 
\[
\mathrm{H}=
\frac{1}{\sqrt{2}}\left(
\begin{array}{cc}
1&1\\
1&-1
\end{array}
\right),
\hspace{3mm}
\mathrm{S}=\left(
\begin{array}{cc}
1&0\\
0&i
\end{array}
\right),
\]
where $i$ denotes the imaginary unit of complex numbers. We will also use the $\mathrm{CNOT}$ gate acting on two qubits (called the control qubit and the target qubit) that maps the basis state $\ket{a}\ket{b}$, for any two bits $a,b\in\{0,1\}$, to the state $\ket{a}\ket{a\oplus b}$ where $\oplus$ denotes the exclusive OR. 
Finally, we will need the following two 2-qubit gates (Controlled-Z and Controlled-S gates):
\[
\mathrm{CZ}=\left(
\begin{array}{cccc}
1&0&0&0\\
0&1&0&0\\
0&0&1&0\\
0&0&0&-1
\end{array}
\right),
\hspace{3mm}
\mathrm{CS}=\left(
\begin{array}{cccc}
1&0&0&0\\
0&1&0&0\\
0&0&1&0\\
0&0&0&i
\end{array}
\right).
\]
\vspace{2mm}

\noindent
{\bf Graph-theoretic notation.}
In this work all the graphs will be undirected and unweighted. For any graph $G=(V,E)$ and any node $u\in V$, we use $N(u)$ to denote the set of neighbors of $u$.
\vspace{2mm}

\noindent
{\bf Graph states.}
Graph states are a special type of quantum states that are associated with graphs \cite{Hein+PRA04}. 
Let $G=(V,E)$ be any undirected graph. The graph state associated with $G$ is the quantum state on $|V|$ qubits constructed in the following way. Let $\{\mathsf{Q}_u\}_{u\in V}$ denote the $|V|$ registers used to store the qubits of the graph state (each register stores one qubit).
First construct the quantum state
\[
\bigotimes_{u\in V}\ket{0}_{\mathsf{Q}_u}
\]
in these registers. Then apply a Hadamard gate on each register. Finally, for each edge $\{u,v\}\in E$, apply the gate $\mathrm{CZ}$ on the couple of registers $(\mathsf{Q}_u,\mathsf{Q}_v)$. \vspace{2mm}

\noindent
{\bf The total variation distance.} Given two probability distributions $p,q\colon X\to [0,1]$ over a finite set $X$, the total variation distance (also called statistical distance) between $p$ and $q$ is defined as 
$\frac{1}{2} \sum_{x\in X}|p(x)-q(x)|$.

\subsection{Classical and quantum LOCAL models}
In this paper we consider the LOCAL communication model in both the classical and quantum scenarios. 
The topology of the network is represented by a graph. Executions proceed with round-based synchrony and each node can transfer one message to each adjacent node per round. Initially the nodes of the network share neither any randomness nor, in the quantum scenario, any entanglement.\footnote{The classical lower bound of our first result (Theorem \ref{th:relation}) actually holds even if the nodes of the network initially share arbitrary randomness.} In this paper all the networks are undirected and unweighted. All links and nodes of the network (corresponding to the edges and nodes of the graph, respectively) are reliable and suffer no faults. Each node has a distinct identifier. Initially, each node knows nothing about the topology of the network except the set of edges incident to itself and the number of nodes of the graph.

The processors at each node operate probabilistically in the classical LOCAL model, and they operate quantumly in the quantum LOCAL model. The messages exchanged between them are, respectively, classical and quantum.
In this paper we do not consider the running time of the processors, as we are only interested in the round complexity.
While the classical lower bound of Theorem~\ref{th2} is proved using a relatively informal definition of the classical LOCAL model, we include its formal definition in Appendix~\ref{sec:clasLocal} for completeness.

\subsection{The construction from prior works}\label{subseq:construction}
We now describe the construction introduced in \cite{Barrett+PRA07}, and also used in \cite{Bravyi+17}, that shows that non-locality can arise when measuring graph states.
For any even integer $d\ge 2$, we define the graph $G_d$ as a ring consisting of $3d$ nodes, and denote the nodes $v_0,v_1,\ldots,v_{3d-1}$ (see Figure \ref{fig:Gd}). It will be convenient to consider this graph as a triangle, with the three nodes $v_0$, $v_d$ and~$v_{2d}$ as corners.
We define $V_{R}=\left\{v_i\:|\:i\in\{1,\ldots,d-1\} \right\}$, $V_{B}=\left\{v_i\:|\:i\in\{d+1,\ldots,2d-1\}\right\}$ and $V_{L}=\left\{v_i\:|\:i\in\{2d+1,\ldots,3d-1\}\right\}$ as the set of nodes on the right side, bottom side and left side, respectively, of the triangle. We also define $V_{even}$ as the set of all nodes of the graph with even label, and $V_{odd}$ as the set of all nodes with odd label.

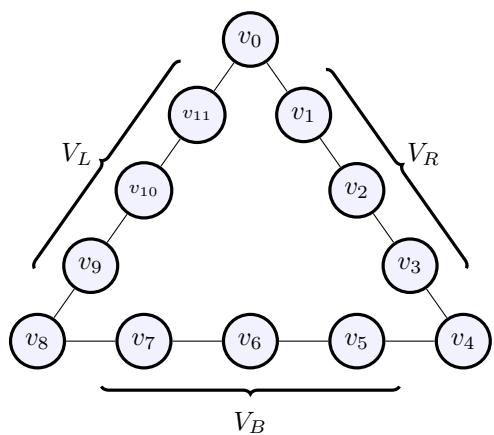
\begin{figure}[ht]
\vspace{3mm}
\centering
\begin{tikzpicture}[scale=0.4,rectnode/.style={shape=circle,draw=black,minimum height=25mm},roundnode/.style={circle, draw=black!60, fill=blue!5, very thick, minimum size=4mm}]
\newcommand\XA{7}
\newcommand\YA{10}
    
    \node[roundnode,draw=black,minimum height=6mm] (a0) at (0*\XA,1*\YA) {$v_0$};    
    \node[roundnode,draw=black,minimum height=6mm] (a1) at (0.25*\XA,0.75*\YA) {$v_1$}; 
    \node[roundnode,draw=black,minimum height=6mm] (a2) at (0.5*\XA,0.5*\YA) {$v_2$}; 
    \node[roundnode,draw=black,minimum height=6mm] (a3) at (0.75*\XA,0.25*\YA) {$v_3$}; 
    \node[roundnode,draw=black,minimum height=6mm] (a4) at (1*\XA,0*\YA) {$v_4$}; 
    \node[roundnode,draw=black,minimum height=6mm] (a5) at (0.5*\XA,0*\YA) {$v_5$};    
    \node[roundnode,draw=black,minimum height=6mm] (a6) at (0*\XA,0*\YA) {$v_6$};  
    \node[roundnode,draw=black,minimum height=6mm] (a7) at (-0.5*\XA,0*\YA) {$v_7$};  
    \node[roundnode,draw=black,minimum height=6mm] (a8) at (-1*\XA,0*\YA) {$v_8$}; 
    \node[roundnode,draw=black,minimum height=6mm] (a9) at (-0.75*\XA,0.25*\YA) {$v_9$}; 
    \node[roundnode,draw=black,minimum height=6mm] (a10) at (-0.5*\XA,0.5*\YA) {\scriptsize $v_{10}$}; 
    \node[roundnode,draw=black,minimum height=6mm] (a11) at (-0.25*\XA,0.75*\YA) {\scriptsize $v_{11}$};
    \draw[very thick,decorate,rotate=180,decoration={brace,amplitude=1mm}] (-0.7*\XA,1.5) -- (0.7*\XA,1.5);
    \draw[very thick,decorate,rotate=55,decoration={brace,amplitude=1mm}] (-0.3*\XA,7.2) -- (0.9*\XA,7.2);
    \draw[very thick,decorate,rotate=305,decoration={brace,amplitude=1mm}] (-0.85*\XA,7.2) -- (0.3*\XA,7.2);
    \node[draw=none,fill=none,] at (0,-2.7) {$V_B$};
    \node[draw=none,fill=none,] at (-5.7,6.2) {$V_L$};
    \node[draw=none,fill=none,] at (5.7,6.2) {$V_R$};

    \path[] 
     (a0) edge (a1)
     (a1) edge (a2)
     (a2) edge (a3)
     (a3) edge (a4)
     (a4) edge (a5)
     (a5) edge (a6)
     (a6) edge (a7)
     (a7) edge (a8)
     (a8) edge (a9)
     (a9) edge (a10) 
 	(a10) edge (a11) 
    (a11) edge (a0); 
 
\end{tikzpicture}
\caption{The graph $G_d$ from prior works (illustrated for $d=4$).}\label{fig:Gd}
\end{figure}

Given three bits $b_0,b_1,b_2\in\{0,1\}$, consider the process $\Pp_d(b_0,b_1,b_2)$ described in Figure~\ref{fig:process}.

\begin{figure}[h!]
\begin{center}
\fbox{
\begin{minipage}{13 cm} 
\begin{itemize}
\item[1.]
Create the graph state on the graph $G_d$.
\item[2.] 
For each $i\in\{0,1,2\}$ apply the quantum gate $\mathrm{S}^{b_i}$ to the qubit of node $v_{di}$ (i.e., depending on the value of the three bits $b_0$, $b_1$ and $b_2$, apply either the gate $\mathrm{S}$ or the identity gate $\mathrm{I}$ on each of three corner nodes $v_0$, $v_d$ and $v_{2d}$ of the graph).
\item[3.]
Apply the Hadamard gate $\mathrm{H}$ to each qubit of the graph.
\item[4.] 
Measure all qubits in the computational basis. For each $v\in V$, let $m_v$ denote the outcome of the measurement done at node $v$.
\end{itemize}
\end{minipage}
}
\end{center}\vspace{-4mm}
\caption{The process $\Pp_d(b_0,b_1,b_2)$.}\label{fig:process}
\end{figure}

From the measurement outcome of the process $\Pp_d(b_0,b_1,b_2)$, let us define four bits $m_{E}$, $m_R$, $m_B$ and $m_L$ as follows:
\begin{eqnarray*}
m_{E}&=&\bigoplus_{v\in V_{even}} m_v,\hspace{16mm}
m_{R}=\bigoplus_{v\in V_{R}\cap V_{odd}} m_v,\\
m_{B}&=&\bigoplus_{v\in V_{B}\cap V_{odd}} m_v,\hspace{12mm}
m_{L}=\bigoplus_{v\in V_{L}\cap V_{odd}} m_v.
\end{eqnarray*}
Refs.~\cite{Barrett+PRA07,Bravyi+17} characterized which combinations of these four bits can arise as an outcome of the process  $\Pp_d(b_0,b_1,b_2)$:

\begin{proposition}{(\cite{Barrett+PRA07,Bravyi+17})}\label{proposition:cond}
For any bits $b_0,b_1,b_2$ and any measurement outcome of the process $\Pp_d(b_0,b_1,b_2)$, the identity $m_{R}\oplus m_{B} \oplus m_{L} = 0$ holds. Additionally, we have:
\[
\left\{
\begin{tabular}{cc}
$m_E$&$=0 \hspace{3mm}\textrm{ if }(b_0,b_1,b_2)=(0,0,0),$\\
$m_{E}\oplus m_{R}\oplus m_{L}$&$= 1\hspace{3mm}\textrm{ if }(b_0,b_1,b_2)=(0,1,1)$,\\
$m_{E}\oplus m_{R}\oplus m_{B}$&$= 1\hspace{3mm}\textrm{ if }(b_0,b_1,b_2)=(1,0,1)$,\\
$m_{E}\oplus m_{B}\oplus m_{L}$&$= 1\hspace{3mm}\textrm{ if }(b_0,b_1,b_2)=(1,1,0)$.
\end{tabular}
\right.
\]
\end{proposition}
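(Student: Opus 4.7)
My plan is to use the stabilizer formalism. Recall that the graph state $\ket{G_d}$ is the unique joint $+1$-eigenstate of the commuting Pauli operators $K_v = X_v \prod_{u \in N(v)} Z_u$, one per $v \in V$. Writing $U = \mathrm{H}^{\otimes V} \prod_{i=0}^{2} \mathrm{S}^{b_i}_{v_{di}}$ for the unitary applied in Steps 2 and 3 of $\Pp_d(b_0,b_1,b_2)$, the pre-measurement state $\ket{\psi} = U\ket{G_d}$ is stabilized by $U K_v U^\dagger$ for every $v$. The key observation is: if $U\bigl(\prod_{v \in S} K_v\bigr) U^\dagger = \epsilon \prod_{w \in T} Z_w$ with $\epsilon \in \{\pm 1\}$ and $T \subseteq V$, then the computational-basis measurement outcomes satisfy $\bigoplus_{w \in T} m_w = (1-\epsilon)/2$ deterministically.

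I would first work out the single-qubit action of $U$ from the standard conjugation identities: at a non-corner vertex or a corner with input bit $0$, $U$ acts as $X \mapsto Z$, $Z \mapsto X$; at a corner with input bit $1$, $U$ acts as $X \mapsto -Y$, $Z \mapsto X$, $Y \mapsto -Z$. Grouping the $X$- and $Z$-factors of the $K_v$'s then yields, for any $S \subseteq V$, the identity $\prod_{v \in S} K_v = (-1)^{|E(G_d[S])|} \bigotimes_{w \in V} X_w^{a_w} Z_w^{b_w}$, where $a_w = \mathbf{1}_{w \in S}$, $b_w = |N(w) \cap S| \bmod 2$, and $X^1 Z^1 = -iY$. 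Under $U$-conjugation the factor at $w$ becomes a multiple of $Z_w$ precisely when $b_w = 0$ (non-corner or corner with input $0$) or $a_w = b_w$ (corner with input $1$). For the universal identity I take $S = V_{odd}$: since $G_d$ is bipartite between $V_{even}$ and $V_{odd}$ (cycle length $3d$ is even), $|E(G_d[S])| = 0$ and $b_w = 0$ everywhere, so $\prod K_v = \prod_{v \in V_{odd}} X_v$, which conjugates to $+\prod_{v \in V_{odd}} Z_v$, giving $m_R \oplus m_B \oplus m_L = 0$. For $(0,0,0)$ I take $S = V_{even}$ and obtain analogously $m_E = 0$. For $(0,1,1)$ I take $S = V_{even} \cup (V_R \cap V_{odd}) \cup (V_L \cap V_{odd})$: here $|E(G_d[S])| = 2d$ (all edges of sides $R$ and $L$, no edge of the bottom, since each bottom edge has one endpoint in $V_B \cap V_{odd}$), so the sign is $+1$; the corner $v_0$ (input $0$) and every non-corner vertex in $S$ have $b_w = 0$, while the two active corners $v_d$ and $v_{2d}$ satisfy $a_w = b_w = 1$ because each has exactly one of its two odd neighbors in $S$. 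The cases $(1,0,1)$ and $(1,1,0)$ are handled by the cyclic symmetry of $G_d$ via the analogous subsets.

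The remaining delicate point is the sign $\epsilon$ in the three conditional cases. At each active corner $w$ the factor $X_w Z_w = -i Y_w$ maps under $U$ (with input bit $1$) to $(-i)(-Z_w) = i Z_w$, while every other factor conjugates to $+Z$ (or $+I$) with no phase. Hence in each conditional case exactly the two corners whose input bit is $1$ each contribute a phase $i$, and $i \cdot i = -1$ gives $\epsilon = -1$, yielding the asserted right-hand side $1$. The main work is precisely this phase accounting together with verifying the parity conditions for the chosen $S$'s; both reduce to direct computation using the single-qubit rules above, so I expect no substantive obstacle beyond bookkeeping.
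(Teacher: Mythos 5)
Your proof is correct. Note that the paper itself gives no proof of Proposition~\ref{proposition:cond}: it is imported as a black box from \cite{Barrett+PRA07,Bravyi+17}, so there is no in-paper argument to compare against. Your stabilizer derivation is essentially the standard one from those references: conjugate products of the generators $K_v = X_v\prod_{u\in N(v)}Z_u$ through $U=\mathrm{H}^{\otimes V}\prod_i \mathrm{S}^{b_i}_{v_{di}}$ and read off deterministic parities of the computational-basis outcomes. I checked the details and they hold up: the sign $(-1)^{|E(G_d[S])|}$ in the normal-ordering of $\prod_{v\in S}K_v$ is right (one anticommutation per edge inside $S$); since $d$ is even, all three corners lie in $V_{even}$, so $S=V_{odd}$ consists entirely of non-corner vertices and yields $+\prod_{v\in V_{odd}}Z_v$ for every input, giving the universal identity $m_R\oplus m_B\oplus m_L=0$; for $S=V\setminus(V_B\cap V_{odd})$ one indeed gets $|E(G_d[S])|=2d$, the two active corners carry $(a_w,b_w)=(1,1)$ and each contributes $X_wZ_w=-iY_w\mapsto iZ_w$, so the overall sign is $i^2=-1$ and the parity is $1$; and the remaining two cases follow from the rotation $v_i\mapsto v_{i+d}$, which preserves parity because $d$ is even. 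The one point worth making explicit in a final write-up is the claim that a $+1$-eigenstate of $\epsilon\prod_{w\in T}Z_w$ yields $\bigoplus_{w\in T}m_w=(1-\epsilon)/2$ with certainty; this is a one-line computation ($\langle m|\psi\rangle=\epsilon(-1)^{\oplus_{w\in T} m_w}\langle m|\psi\rangle$ for any outcome $m$ of nonzero amplitude), but it is the hinge of the whole argument. No gaps otherwise.
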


It will be convenient to represent a measurement outcome $\{m_v\}_{v\in V}$ as the binary string $m\in \{0,1\}^{3d}$ where the $i$-th bit is $m_{v_i}$ for each $i\in\{0,\ldots,3d-1\}$. We define the \emph{support} of the process $\Pp_d(b_0,b_1,b_2)$, and denote it $\suppP_d(b_0,b_1,b_2)$, as the set of all binary strings in $\{0,1\}^{3d}$ corresponding to measurement outcomes arising (with non-zero probability) from the process $\Pp_d(b_0,b_1,b_2)$. 


Finally, our lower bounds will rely on the following lemma, which essentially shows that the quantum correlations from the process $\Pp_d(b_0,b_1,b_2)$ cannot be simulated classically by local affine functions. 
\begin{lemma}{(\cite{Barrett+PRA07,Bravyi+17})}\label{lemma:affine}
Consider any affine function $q_{E}\colon \{0,1\}^3\to  \{0,1\}$ and any three affine functions $q_{R}\colon \{0,1\}^2\to \{0,1\}$, $q_{L}\colon \{0,1\}^2\to \{0,1\}$, $q_{B}\colon \{0,1\}^2\to \{0,1\}$ such that 
\[
q_{R}(b_0,b_1)\oplus q_{B}(b_1,b_2)\oplus q_{L}(b_0,b_2)=0
\]
holds for any $(b_1,b_2,b_3)\in \{0,1\}^3$. Then at least one of the four following equalities does not hold:
\begin{eqnarray*}
q_{E}(0,0,0) &= 0,\\
q_{E}(0,1,1) \oplus q_{R}(0,1) \oplus q_{L}(0,1)&= 1,\\
q_{E}(1,0,1) \oplus q_{R}(1,0) \oplus q_{B}(0,1)&= 1,\\
q_{E}(1,1,0) \oplus q_{B}(1,0) \oplus q_{L}(1,0)&= 1.
\end{eqnarray*}
\end{lemma}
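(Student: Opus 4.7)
The plan is to assume for contradiction that all four claimed equalities hold, and to derive $0=1$ by XOR-ing them together. Since everything is affine over $\mathbb{F}_2$, the calculation reduces to checking that certain coefficient sums vanish, and the constraint $q_R \oplus q_B \oplus q_L = 0$ will do exactly that work.

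First I would parameterize the affine functions: write
$q_E(b_0,b_1,b_2)=\alpha_0\oplus\alpha_1 b_0\oplus\alpha_2 b_1\oplus\alpha_3 b_2$,
$q_R(b_0,b_1)=r_0\oplus r_1 b_0\oplus r_2 b_1$,
$q_B(b_1,b_2)=s_0\oplus s_1 b_1\oplus s_2 b_2$, and
$q_L(b_0,b_2)=t_0\oplus t_1 b_0\oplus t_2 b_2$.
Expanding the hypothesis $q_R(b_0,b_1)\oplus q_B(b_1,b_2)\oplus q_L(b_0,b_2)=0$ as a polynomial identity in $b_0,b_1,b_2$ forces, by matching coefficients,
\begin{equation*}
r_0\oplus s_0\oplus t_0=0,\qquad r_1\oplus t_1=0,\qquad r_2\oplus s_1=0,\qquad s_2\oplus t_2=0.
\end{equation*}

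Next I would XOR the four target equalities. The right-hand sides contribute $0\oplus 1\oplus 1\oplus 1 = 1$. The four points $(0,0,0),(0,1,1),(1,0,1),(1,1,0)$ at which $q_E$ is evaluated are precisely the parity-zero strings in $\{0,1\}^3$, and each coordinate $b_i$ takes the value $1$ at exactly two of them; therefore the $q_E$ contribution sums to $4\alpha_0\oplus 2\alpha_1\oplus 2\alpha_2\oplus 2\alpha_3=0$ over $\mathbb{F}_2$. For the other terms, each of $q_R,q_B,q_L$ appears once at $(0,1)$ and once at $(1,0)$, so its two occurrences sum to the XOR of its two non-constant coefficients: $r_1\oplus r_2$, $s_1\oplus s_2$, $t_1\oplus t_2$. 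Adding these three quantities gives $(r_1\oplus t_1)\oplus(r_2\oplus s_1)\oplus(s_2\oplus t_2)$, which vanishes by the coefficient relations derived from the constraint.

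Putting the two observations together, the XOR of the left-hand sides is $0$ while the XOR of the right-hand sides is $1$, a contradiction. Hence at least one of the four stated equalities must fail.

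There is no real obstacle here beyond careful bookkeeping; the only subtle point is noticing that the particular four input triples on which $q_E$ is evaluated form the parity-even subcube, which is exactly what kills the linear part of $q_E$ under the XOR. I would present the proof in essentially the order above, perhaps suppressing the explicit parameterization of $q_E$ by arguing directly that $q_E(0,0,0)\oplus q_E(0,1,1)\oplus q_E(1,0,1)\oplus q_E(1,1,0)=0$ for every affine $q_E$, since this evaluation set is closed under the $\mathbb{F}_2$-affine relations needed.
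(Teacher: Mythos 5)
Your proof is correct. Note that the paper does not actually prove Lemma~\ref{lemma:affine}; it imports it from \cite{Barrett+PRA07,Bravyi+17}, so there is no in-paper argument to compare against. Your route --- XOR the four equalities, observe that the evaluation points $(0,0,0),(0,1,1),(1,0,1),(1,1,0)$ form the even-parity subcube so the $q_E$ contributions cancel for any affine $q_E$, and use the coefficient identities $r_1\oplus t_1=r_2\oplus s_1=s_2\oplus t_2=0$ forced by the hypothesis to kill the remaining terms --- is the standard GHZ-style parity argument underlying those references, and every step checks out (in particular each of $q_R,q_B,q_L$ is indeed evaluated exactly once at $(0,1)$ and once at $(1,0)$, so its constant term cancels and only $r_1\oplus r_2$, $s_1\oplus s_2$, $t_1\oplus t_2$ survive). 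One cosmetic remark: the constant-term relation $r_0\oplus s_0\oplus t_0=0$ you derive is never needed, for exactly this reason.
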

\section{Efficient Construction of Graph States}
In this section we consider the construction of graph states in the distributed setting.  More precisely, we consider the following problem that we call the \emph{subgraph state construction problem}. The problem is defined on an arbitrary network $G=(V,E)$.
Each node $u\in V$ receives a bit $c_u\in\{0,1\}$ as input. Let $G'=(V',E')$ denote the subgraph of $G$ induced by the node set $V'=\{v\in V\:|\: c_v=1\}$. The problem asks to create the graph state corresponding to~$G'$, shared over the nodes in $V'$: each node $v\in V'$ of the network should own the corresponding 1-qubit register of the graph state (which is Register~$\mathsf{Q}_v$ in the notations of Section \ref{subsec:notation}). 


 The following theorem shows that this problem can be done efficiently.
\begin{theorem}\label{th:graph-state}
In the quantum LOCAL model, the subgraph state construction problem can be solved in 2 rounds.
\end{theorem}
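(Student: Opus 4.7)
The plan is to execute, in parallel across every edge of $E'$, a standard gate-teleportation protocol for a single $\mathrm{CZ}$ between two adjacent nodes, arranged so that every correction that must be tracked is a $Z$. Because $Z$ commutes with every $\mathrm{CZ}$ and with itself, all corrections can be deferred to a single local-computation phase after the second round of communication; this is what makes the whole construction fit in two rounds.

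In Round~1, every node $u$ sends its bit $c_u$ classically to each neighbor. If $c_u=1$, it initializes $\mathsf{Q}_u$ in $\ket{+}$ and, for every neighbor $v$ whose identifier exceeds $u$'s, allocates an ancillary register $A_{u\to v}$ in $\ket{0}$, applies $\mathrm{CNOT}$ from $\mathsf{Q}_u$ to $A_{u\to v}$, and sends $A_{u\to v}$ to $v$ (quantum). In Round~2, each node $v$ with $c_v=1$ now knows $c_w$ for every neighbor $w$; for each smaller-identifier neighbor $u$ with $c_u=1$ it applies $\mathrm{CZ}(\mathsf{Q}_v,A_{u\to v})$, then $\mathrm{H}$ to $A_{u\to v}$, measures the ancilla in the computational basis to obtain a bit $m_v^{(u)}\in\{0,1\}$, and classically sends this bit back to $u$ in the same round.

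To verify correctness I would expand the full pre-measurement state in the computational basis and sum out all the ancillary indices. A direct computation shows that, after the measurements, the joint state on $\{\mathsf{Q}_w\}_{w\in V'}$ equals the target graph state of $G'$ acted on by a single-qubit correction $Z^{M_u}$ on each $\mathsf{Q}_u$, where $M_u$ is the XOR of all the bits $m_v^{(u)}$ that $u$ receives from its larger-identifier neighbors during Round~2. In the final local step (after Round~2's classical traffic is delivered) each node applies precisely this $Z^{M_u}$; the residual correction cancels, and the resulting joint state is exactly the graph state of $G'$, as required.

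The step requiring the most care is verifying that the $Z$-errors on $\mathsf{Q}_u$ contributed by the different outgoing ancillas compose as independent $Z$'s on $\mathsf{Q}_u$ alone, even though $\mathsf{Q}_u$ is in a GHZ-type entangled state with all of its outgoing ancillas after Round~1's $\mathrm{CNOT}$s (and possibly also holds ancillas from its smaller-identifier neighbors). This is not obvious a priori, but follows by directly expanding the post-measurement state in the computational basis: the GHZ structure forces $\mathsf{Q}_u$ and each outgoing ancilla to share the same computational-basis index, and after $v$'s $\mathrm{CZ}$, $\mathrm{H}$, and measurement the sum over each ancilla's index factorizes and produces exactly one independent $Z^{m_v^{(u)}}$ phase on $\mathsf{Q}_u$, with no residual $X$- or $S$-type component and no spurious extra $\mathrm{CZ}$'s coupling non-neighbouring $\mathsf{Q}$'s.
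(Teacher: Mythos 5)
Your core construction---gate-teleporting each $\mathrm{CZ}$ across an edge via a $\mathrm{CNOT}$-copied ancilla, a $\mathrm{CZ}$ and Hadamard-basis measurement at the receiver, and a deferred $Z$ correction at the sender---is sound, and your observation that all outgoing ancillas of $u$ share the computational-basis index of $\mathsf{Q}_u$, so that the sums over ancilla outcomes factorize and the corrections compose to a single $Z^{M_u}$, is the right way to verify it. There is, however, a concrete gap in how you handle edges $\{u,v\}\in E$ with $c_u=1$ and $c_v=0$. In Round~1 node $u$ does not yet know $c_v$, so (as you specify) it creates and ships $A_{u\to v}$ to \emph{every} larger-identifier neighbor; but your Round~2 instructions are executed only by nodes $v$ with $c_v=1$, so a neighbor with $c_v=0$ never measures $A_{u\to v}$. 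That ancilla stays entangled with $\mathsf{Q}_u$, and tracing it out dephases $\mathsf{Q}_u$, so the state held by $V'$ is a mixture rather than the graph state of $G'$ whenever some node of $V'$ has a neighbor outside $V'$. The fix is one line and uses machinery you already have: every node that receives an ancilla, whatever its own bit, applies $\mathrm{H}$ and measures it (inserting the $\mathrm{CZ}$ beforehand only when $c_u\wedge c_v=1$) and returns the outcome, and $u$ folds that bit into $M_u$; the same expansion shows this disentangles the ancilla up to the same $Z^m$ correction. You need to say this explicitly for the protocol to be correct as stated.

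With that repair your protocol is a valid two-round solution, but it takes a genuinely different route from the paper's. The paper's protocol is measurement-free: both endpoints of an edge apply the controlled-$S$ gate $\mathrm{CS}$ between their own qubit and the ancilla they received, so the two phases $i^{jk}$ multiply to $(-1)^{jk}$, i.e., exactly one $\mathrm{CZ}$ per edge of $E'$; the ancillas are then sent back in Round~2 and uncomputed by a second $\mathrm{CNOT}$, which handles the $c_v=0$ case automatically and requires no classical corrections. Your version trades the second quantum message for a classical bit plus Pauli bookkeeping (and would still work if the return channel were classical only); the paper's trades the measurements and correction tracking for a second round of quantum communication and a fully unitary, deterministic argument. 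Both fit in two rounds.
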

\begin{proof}
The protocol is presented in Figure \ref{fig:protocol} and illustrated, for a path of two nodes, in Figure \ref{fig1}. This is clearly a 2-round protocol: one round is used at Step 1(c) and one round is used at Step 2(b).

\begin{figure}[h!]
\begin{center}
\fbox{
\begin{minipage}{13 cm} 
{\bf Input:} each node $u\in V$ receives a bit $c_u$
\vspace{2mm}
\begin{itemize}
\item[1.]
Each node $u\in V$ does the following:
\begin{itemize}
\item[(a)]
it prepares one 1-qubit register $\mathsf{Q}_u$ and, for each neighbor $v\in N(u)$, one 1-qubit register denoted $\mathsf{R}_u^v$ (all these registers are initialized to the quantum state $\ket{0}$); 
\item[(b)]
it applies a Hadamard gate on $\mathsf{Q}_u$, and then a $\mathrm{CNOT}$ gate on $(\mathsf{Q}_u,\mathsf{R}_u^v)$ with $\mathsf{Q}_u$ as control qubit, for each $v\in N(u)$;
\item[(c)]
it sends, for each $v\in N(u)$, the register $\mathsf{R}_u^v$ and the  bit $c_u$ to node $v$.
\end{itemize}
\item[2.]
Each node $u\in V$ (who now owns the registers $\mathsf{Q}_u$ and the registers $\mathsf{R}_v^u$ just received) does the following:
\begin{itemize}
\item[(a)]
it applies the gate $\mathrm{CS}$ to the pair of registers $(\mathsf{Q}_u,\mathsf{R}_v^u)$ for each $v\in N(u)$ such that $c_u\wedge c_v=1$;
\item[(b)]
it sends back Register $\mathsf{R}_v^u$ to node $v$, for each $v\in N(u)$.
\end{itemize}
\item[3.]
Each node $u\in V$ (who now owns the registers $\mathsf{Q}_u$ and the registers $\mathsf{R}_u^v$) does the following:
\begin{itemize}
\item[(a)]
it applies a $\mathrm{CNOT}$ gate on $(\mathsf{Q}_u,\mathsf{R}_u^v)$ with $\mathsf{Q}_u$ as control qubit, for each $v\in N(u)$;
\item[(b)]
it disregards the registers $\mathsf{R}_u^v$ for all $v\in N(u)$.
\end{itemize}
\end{itemize}
\end{minipage}
}
\end{center}\vspace{-4mm}
\caption{The quantum distributed algorithm solving the subgraph state construction problem.}\label{fig:protocol}
\end{figure}

We now prove that the protocol is correct. At the end of Step 1(b), the state of the whole network is:
\[
\ket{\varphi}=
\bigotimes_{u\in V}\left(\frac{1}{\sqrt{2}}\sum_{j=0}^1\left(\ket{j}_{\mathsf{Q}_u}\bigotimes_{v\in N(u)} \ket{j}_{\mathsf{R}_u^v}\right)\right).
\]
Let us fix any two nodes $u$ and $v$ such that $\{u,v\}\in E$. The state $\ket{\varphi}$ can be rewritten as
\[
\frac{1}{2}
\sum_{j=0}^1\sum_{k=0}^1\ket{j}_{\mathsf{Q}_u}\ket{k}_{\mathsf{Q}_v}
\ket{j}_{\mathsf{R}_u^v}\ket{k}_{\mathsf{R}_v^u}\ket{\psi_{j,u,k,v}}
\]
where 
\[
\ket{\psi_{j,u,k,v}}=\bigotimes_{v'\in N(u)\setminus\{v\}}\ket{j}_{\mathsf{R}_u^{v'}}\bigotimes_{u'\in N(v)\setminus\{u\}}\ket{k}_{\mathsf{R}_v^{u'}}.
\]
If node $u$ applies the gate $\mathrm{CS}$ to the pair of registers $(\mathsf{Q}_u,\mathsf{R}_v^u)$ and node $v$ applies the gate $\mathrm{CS}$ to the pair of registers $(\mathsf{Q}_v,\mathsf{R}_u^v)$, then the state $\ket{\varphi}$ is mapped to the quantum state
\[
\mathrm{CZ}_{(\mathsf{Q}_u,\mathsf{Q}_v)}\ket{\varphi},
\]
where $\mathrm{CZ}_{(\mathsf{Q}_u,\mathsf{Q}_v)}$ denotes the gate $\mathrm{CZ}$ applied to the pair of registers $(\mathsf{Q}_u,\mathsf{Q}_v)$.
Since $c_u\wedge c_v =1$ if and only if $\{u,v\}\in E'$, at the end of Step 2, the whole state of the network is 
\[
\left(
\prod_{\{u,v\}\in E'}\mathrm{CZ}_{(\mathsf{\mathsf{Q}_u},\mathsf{Q}_v)}\right)\ket{\varphi}.
\]
Since Step 3 simply disentangles and then disregards the registers $\mathsf{R}_u^v$ for all $(u,v)\in V\times V$, at the end of Step 3 we obtain the desired graph state shared by the nodes in $V'$.
\end{proof}
\begin{figure}[ht]
\vspace{3mm}
\centering
\begin{tikzpicture}[scale=0.5,rectnode/.style={shape=circle,draw=black,minimum height=25mm},roundnode/.style={circle, draw=black!60, fill=blue!5, very thick, minimum size=6mm}]
    \draw[thick] (-9.5,0) -- (9,0);
    \draw[thick] (-9.5,2) -- (-4,2);
    \draw[thick] (-9.5,4) -- (-4,4);
    \draw[thick] (-9.5,6) -- (9,6);
    
    \draw[thick] (-4,2) -- (-2,4);
    \draw[thick] (-4,4) -- (-2,2);
    
    \draw[thick] (-2,4) -- (2,4);
    \draw[thick] (-2,2) -- (2,2);
    
    \draw[thick] (2,4) -- (4,2);
    \draw[thick] (2,2) -- (4,4);
    \draw[thick] (4,2) -- (7.3,2);
    \draw[thick] (4,4) -- (7.3,4);

    \node[draw=none,fill=none,] at (-13,5) {$u$};
    \node[draw=none,fill=none,] at (-13,1) {$v$};
    \node[draw=none,fill=none,] at (-10.7,6) {$\ket{0}_{\mathsf{Q}_u}$};
    \node[draw=none,fill=none,] at (-10.7,4) {$\ket{0}_{\mathsf{R}_u^v}$};
    \node[draw=none,fill=none,] at (8.2,4) {$\ket{0}_{\mathsf{R}_u^v}$};
    \node[draw=none,fill=none,] at (-10.7,0) {$\ket{0}_{\mathsf{Q}_v}$};
    \node[draw=none,fill=none,] at (-10.7,2) {$\ket{0}_{\mathsf{R}_v^u}$};
    \node[draw=none,fill=none,] at (8.2,2) {$\ket{0}_{\mathsf{R}_v^u}$};
    \draw[thick,decorate,rotate=90,decoration={brace,amplitude=1mm}] (-0.5,12) -- (2.5,12);
    \draw[thick,decorate,rotate=90,decoration={brace,amplitude=1mm}] (3,12) -- (6.5,12);
    
    \draw[thick,decorate,rotate=0,decoration={brace,amplitude=1mm}] (-12,7.5) -- (-2,7.5);
    \draw[thick,decorate,rotate=0,decoration={brace,amplitude=1mm}] (-1.7,7.5) -- (4,7.5);
    \draw[thick,decorate,rotate=0,decoration={brace,amplitude=1mm}] (4.3,7.5) -- (9,7.5);
    
    \node[draw=none,fill=none,] at (-7,8.3) {Step 1};
    \node[draw=none,fill=none,] at (1.2,8.3) {Step 2};
    \node[draw=none,fill=none,] at (6.8,8.3) {Step 3};
    
    \node[thick,draw=black,fill=white,minimum height=4mm] (a4) at (-8,0) {$\mathrm{H}$}; 
    \node[thick,draw=black,fill=white,minimum height=4mm] (a4) at (-8,6) {$\mathrm{H}$}; 
    
    \node[thick,circle,draw=black,minimum height=5mm] (a4) at (-5,4) {}; 
    \node[circle,draw=black,fill=black,minimum height=1mm,scale=0.6] (a4) at (-5,6) {}; 
    \draw[thick] (-5,6) -- (-5,3.5);
    \node[thick,circle,draw=black,minimum height=5mm] (a4) at (-5,2) {}; 
    \node[circle,draw=black,fill=black,minimum height=1mm,scale=0.6] (a4) at (-5,0) {}; 
    \draw[thick] (-5,2.5) -- (-5,0);
    
    \node[thick,circle,draw=black,minimum height=5mm] (a4) at (5,4) {}; 
    \node[circle,draw=black,fill=black,minimum height=1mm,scale=0.6] (a4) at (5,6) {}; 
    \draw[thick] (5,6) -- (5,3.5);
    \node[thick,circle,draw=black,minimum height=5mm] (a4) at (5,2) {}; 
    \node[circle,draw=black,fill=black,minimum height=1mm,scale=0.6] (a4) at (5,0) {}; 
    \draw[thick] (5,2.5) -- (5,0);
    
    \draw [thick,draw=black,fill=white] (-0.6,-0.5) rectangle (0.6,2.5);
    \node[draw=none,fill=none,] at (0,1) {$\mathrm{CS}$}; 
    \draw [thick,draw=black,fill=white] (-0.6,3.5) rectangle (0.6,6.5);
    \node[draw=none,fill=none,] at (0,5) {$\mathrm{CS}$}; 

\end{tikzpicture}
\caption{Our protocol illustrated for a 2-path graph $G=(V,E)$ with $V=\{u,v\}$, $E=\{\{u,v\}\}$ and $c_u=c_v=1$ (the classical messages are omitted from the figure).}
 \label{fig1}
\end{figure}
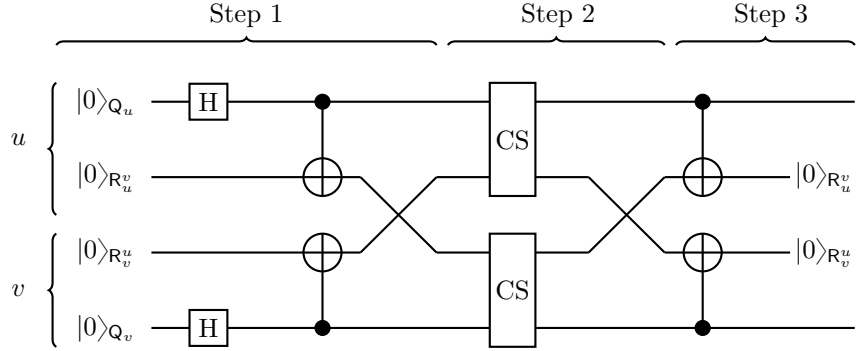

\section{Separation between the Classical and Quantum LOCAL Models} \label{sec:RelSep}
In this section we prove Theorem \ref{th1}.

For any even integer $d\ge 2$, remember the network $G_d=(V,E)$ defined in Section \ref{subseq:construction}, where $V=\{v_0,\ldots,v_{3d-1}\}$. In this section we will consider the network~$\Gg_d$ with node set $V\cup\{w_0,w_1,w_2\}$ and edge set $E\cup\{\{v_0,w_0\},\{v_d,w_1\},\{v_{2d},w_2\}\}$, which is obtained from $G_d$ by adding one node to each corner (see Figure~\ref{fig:graph2}).

\begin{figure}[ht]
\vspace{3mm}
\centering
\begin{tikzpicture}[scale=0.4,rectnode/.style={shape=circle,draw=black,minimum height=25mm},roundnode/.style={circle, draw=black!60, fill=blue!5, very thick, minimum size=4mm}]
\newcommand\XA{7}
\newcommand\YA{10}
    
    \node[roundnode,draw=black,minimum height=6mm] (b0) at (0*\XA,1.3*\YA) {\footnotesize $w_0$};  
    \node[roundnode,draw=black,minimum height=6mm] (b2) at (-1.4*\XA,-0.15*\YA) {\footnotesize $w_2$};  
    \node[roundnode,draw=black,minimum height=6mm] (b1) at (1.4*\XA,-0.15*\YA) {\footnotesize $w_1$};
    \node[roundnode,draw=black,minimum height=6mm] (a0) at (0*\XA,1*\YA) {$v_0$};    
    \node[roundnode,draw=black,minimum height=6mm] (a1) at (0.25*\XA,0.75*\YA) {$v_1$}; 
    \node[roundnode,draw=black,minimum height=6mm] (a2) at (0.5*\XA,0.5*\YA) {$v_2$}; 
    \node[roundnode,draw=black,minimum height=6mm] (a3) at (0.75*\XA,0.25*\YA) {$v_3$}; 
    \node[roundnode,draw=black,minimum height=6mm] (a4) at (1*\XA,0*\YA) {$v_4$}; 
    \node[roundnode,draw=black,minimum height=6mm] (a5) at (0.5*\XA,0*\YA) {$v_5$};    
    \node[roundnode,draw=black,minimum height=6mm] (a6) at (0*\XA,0*\YA) {$v_6$};  
    \node[roundnode,draw=black,minimum height=6mm] (a7) at (-0.5*\XA,0*\YA) {$v_7$};  
    \node[roundnode,draw=black,minimum height=6mm] (a8) at (-1*\XA,0*\YA) {$v_8$}; 
    \node[roundnode,draw=black,minimum height=6mm] (a9) at (-0.75*\XA,0.25*\YA) {$v_9$}; 
    \node[roundnode,draw=black,minimum height=6mm] (a10) at (-0.5*\XA,0.5*\YA) {\scriptsize $v_{10}$}; 
    \node[roundnode,draw=black,minimum height=6mm] (a11) at (-0.25*\XA,0.75*\YA) {\scriptsize $v_{11}$};
    \draw[very thick,decorate,rotate=180,decoration={brace,amplitude=1mm}] (-0.7*\XA,1.5) -- (0.7*\XA,1.5);
    \draw[very thick,decorate,rotate=55,decoration={brace,amplitude=1mm}] (-0.3*\XA,7.2) -- (0.9*\XA,7.2);
    \draw[very thick,decorate,rotate=305,decoration={brace,amplitude=1mm}] (-0.85*\XA,7.2) -- (0.3*\XA,7.2);
    \node[draw=none,fill=none,] at (0,-2.7) {$V_B$};
    \node[draw=none,fill=none,] at (-5.7,6.2) {$V_L$};
    \node[draw=none,fill=none,] at (5.7,6.2) {$V_R$};

    \path[] 
     (a0) edge (b0)
     (a4) edge (b1)
     (a8) edge (b2)
     (a0) edge (a1)
     (a1) edge (a2)
     (a2) edge (a3)
     (a3) edge (a4)
     (a4) edge (a5)
     (a5) edge (a6)
     (a6) edge (a7)
     (a7) edge (a8)
     (a8) edge (a9)
     (a9) edge (a10) 
 	(a10) edge (a11) 
    (a11) edge (a0); 
 
\end{tikzpicture}
\caption{The network $\Gg_d$ considered to prove the separation (illustrated for $d=4$).}\label{fig:graph2}
\end{figure}
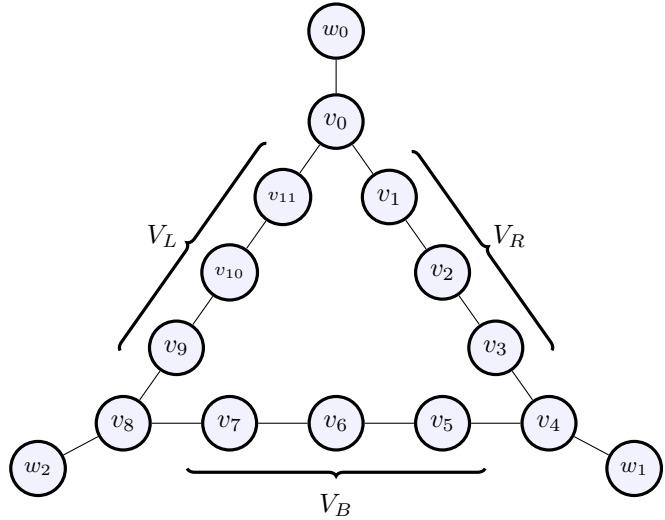

We now describe the computational problem used to prove our separation.
 The network considered is $\Gg_d$, for any even integer $d\ge 2$. 
 The input consists of three bits $b_0$, $b_1$ and $b_2$: node $w_0$ is given $b_0$, node $w_1$ is given $b_1$, and node $w_{2}$ is given $b_2$ (the other nodes have no input). The output is defined as follows: for each $i\in\{0,1,\ldots,3d-1\}$, the node $v_i$ should output one bit $x_i$. The nodes $w_0$, $w_1$ and $w_2$ do not output anything. The output can thus be seen as a binary string $(x_0,\ldots,x_{3d-1})$ of length $3d$. We say that this string is \emph{valid} if it is in the set $\suppP_d(b_0,b_1,b_2)$. 

The following theorem shows an upper bound on the complexity of this problem in the quantum LOCAL model and a lower bound in the classical LOCAL model.
\begin{theorem}\label{th:relation}
There exists a 2-round quantum algorithm that always outputs a valid string. For any integer $T\le d/2$, no $T$-round classical algorithm can output a valid string with probability greater than $7/8$ on all inputs $(b_0,b_1,b_2)\in\{0,1\}^3$, even if arbitrary prior randomness is allowed.
\end{theorem}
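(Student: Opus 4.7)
The proof has two parts: a 2-round quantum algorithm and a classical lower bound. I plan to address them separately.

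\textbf{Quantum upper bound.} I would invoke Theorem~\ref{th:graph-state} on the network $\Gg_d$, setting $c_v = 1$ for every $v \in V(G_d)$ and $c_{w_j} = 0$ for $j \in \{0,1,2\}$; this prepares the graph state of $G_d$, shared among the $v$-nodes, in 2 rounds. In parallel with the first round of that protocol, each $w_j$ forwards its input bit $b_j$ along its only edge to the corner $v_{dj}$. After 2 rounds every corner knows its input bit, so I would have it apply $\mathrm{S}^{b_j}\mathrm{H}$ locally to its qubit, have every other $v_i$ apply $\mathrm{H}$, and have every $v_i$ measure in the computational basis and output the result. By construction this is a distributed realisation of the process $\Pp_d(b_0,b_1,b_2)$, so by the very definition of $\suppP_d(b_0,b_1,b_2)$ the output is always valid.

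\textbf{Classical lower bound: derandomization.} The plan is to combine a simple averaging step with a locality argument that triggers Lemma~\ref{lemma:affine}. Suppose for contradiction that some randomized $T$-round algorithm with $T \le d/2$ outputs a valid string with probability strictly greater than $7/8$ on every $(b_0,b_1,b_2) \in \{0,1\}^3$. For each realisation $r$ of the (arbitrary) shared randomness let $N(r) \in \{0,1,\ldots,8\}$ be the number of inputs on which the resulting deterministic algorithm is invalid. Linearity of expectation gives
\[
\mathbb{E}_r[N(r)] \;=\; \sum_{(b_0,b_1,b_2) \in \{0,1\}^3} \Pr\bigl[\text{invalid on }(b_0,b_1,b_2)\bigr] \;<\; 8 \cdot \tfrac{1}{8} \;=\; 1,
\]
so it suffices to prove the deterministic claim that $N(r) \ge 1$ for every $r$; combined with the inequality above this is an immediate contradiction.

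\textbf{Classical lower bound: the deterministic claim.} This is the main technical obstacle. Fix $r$ and note that the only input-holding nodes of $\Gg_d$ are the three pendants $w_0, w_1, w_2$, which are pairwise at distance $d+2$. Hence for any ring node $v_i$ and any distinct $j,j' \in \{0,1,2\}$,
\[
\mathrm{dist}(v_i, w_j) + \mathrm{dist}(v_i, w_{j'}) \;\ge\; \mathrm{dist}(w_j, w_{j'}) \;=\; d+2 \;>\; 2T,
\]
so $v_i$ lies within distance $T$ of at most one $w_j$. Consequently the deterministic output $x_{v_i}$ depends on at most one of the bits $b_0, b_1, b_2$, which makes it an affine function of $(b_0,b_1,b_2)$. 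Defining $x_E, x_R, x_B, x_L$ from the outputs $\{x_{v_i}\}$ in exact analogy with $m_E, m_R, m_B, m_L$, each of the four is a XOR of affine functions and thus itself affine. If the deterministic output were valid on all 8 inputs, Proposition~\ref{proposition:cond} would force $x_R \oplus x_B \oplus x_L = 0$ to hold on every $(b_0,b_1,b_2) \in \{0,1\}^3$ together with the four $x_E$-equalities at the inputs $(0,0,0), (0,1,1), (1,0,1), (1,1,0)$. These are precisely the hypotheses of Lemma~\ref{lemma:affine}, whose conclusion then forces one of the four $x_E$-equalities to fail---a contradiction. Hence $N(r) \ge 1$ for every $r$, closing the argument.
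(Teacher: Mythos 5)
Your proposal follows the paper's own proof essentially step for step: the same 2-round quantum protocol obtained from Theorem~\ref{th:graph-state} with the input bits piggybacked on the first round (followed by the local Steps 2--4 of $\Pp_d(b_0,b_1,b_2)$), and the same derandomize-then-apply-Lemma~\ref{lemma:affine} argument for the lower bound; your expectation computation is just the contrapositive form of the paper's counting argument.

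There is, however, one step that you assert but do not establish, and it is exactly the hypothesis that makes Lemma~\ref{lemma:affine} true. The lemma does not merely require $x_R$, $x_B$, $x_L$ to be affine: it requires $q_R$ to be a function of $(b_0,b_1)$ only, $q_B$ of $(b_1,b_2)$ only, and $q_L$ of $(b_0,b_2)$ only. Without these domain restrictions the lemma is false --- if, say, $q_R$ is allowed to depend on $b_2$, the parity obstruction underlying its proof disappears and all four equalities can be satisfied simultaneously by affine functions. Your triangle-inequality computation shows that each $v_i$ is within distance $T$ of at most one pendant $w_j$, which yields affinity of each $x_{v_i}$, but it does not rule out that some node of $V_R$ is close to $w_2$, i.e., that $x_R$ depends on $b_2$. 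What you need instead is the (equally easy, but different) observation that every node of $V_R$ is at distance at least $d+2>T$ from $w_2$, and symmetrically for $V_B$ and $V_L$; with that stated, the hypotheses of Lemma~\ref{lemma:affine} are genuinely met and the rest of your argument goes through. Two cosmetic remarks: the corner operation should be ``apply $\mathrm{S}^{b_j}$ and then $\mathrm{H}$,'' i.e., the operator $\mathrm{H}\mathrm{S}^{b_j}$ (the two do not commute, so the order in which you wrote the product is reversed); and the paper also notes that nodes can recognize their role from their degree alone, so no knowledge of global position needs to be assumed.
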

\begin{proof}
The considered computational problem can easily be solved in two rounds in the quantum setting by implementing the following process.
\begin{process} \label{proc:rel}
The nodes of the network first apply the 2-round algorithm of Theorem \ref{th:graph-state} with input $c_{w_0}=c_{w_1}=c_{w_2}=0$ and  $c_{v}=1$ for each $v\in V$.
This constructs the graph state over the subgraph $G_d$ of $\Gg_d$. Moreover, for each $i\in\{0,1,2\}$, the node $w_i$ concurrently sends its input $b_i$ to its neighbor $v_{di}$ (the messages can be appended to the messages of the algorithm of Theorem \ref{th:graph-state}). Finally, the nodes of $V$ implement Steps~2--4 of the process $\Pp_d(b_0,b_1,b_2)$, which can be done without communication, and output their measurement outcomes.
\end{process}
Note that implementing Process~\ref{proc:rel} requires each node to know whether it is an input node ($w_0$, $w_1$ or $w_2$), a corner node on the ring ($v_0$, $v_d$ or $v_{2d}$) or a non-corner node on the ring (all the other nodes). This is not a problem since each node knows its degree and the type of the nodes depends only on their degrees: the nodes $w_0$, $w_1$ and $w_2$ are the nodes of degree 1, the nodes $v_0$, $v_d$ and $v_{2d}$ are the nodes of degree 3, and all the other nodes have degree 2.

We now show the classical lower bound, which uses the same argument as in \cite{Barrett+PRA07} and holds even if the nodes of the network share prior randomness. Consider any classical distributed algorithm~$\mathscr{A}$ and fix its randomness $r$ (the string $r$ represents both the shared prior randomness and the random bits used by the algorithm). This defines a deterministic algorithm that we denote $\mathscr{A}(r)$. Let us write $q_v(b_0,b_1,b_2)$ the bit output at node $v$ by $\mathscr{A}(r)$, for each $v\in V$. Let us define 
\begin{eqnarray*}
q_{E}(b_0,b_1,b_2)&=&\bigoplus_{v\in V_{even}} q_v(b_0,b_1,b_2),\hspace{16mm}
q_{R}(b_0,b_1,b_2)=\bigoplus_{v\in V_{R}\cap V_{odd}} q_v(b_0,b_1,b_2),\\
q_{B}(b_0,b_1,b_2)&=&\bigoplus_{v\in V_{B}\cap V_{odd}} q_v(b_0,b_1,b_2),\hspace{12mm}
q_{L}(b_0,b_1,b_2)=\bigoplus_{v\in V_{L}\cap V_{odd}} q_v(b_0,b_1,b_2).
\end{eqnarray*}
Assume that the algorithm uses at most $d/2$ rounds. Then $q_E$,  $q_{R}$, $q_B$ and $q_L$ are affine functions of $b_0,b_1,b_2$. Moreover, $q_{R}$ can only depend on $b_0$ and $b_1$, $q_{B}$ can only depend on $b_1$ and $b_2$, and $q_{L}$ can only depend on $b_0$ and $b_2$. From Proposition~\ref{proposition:cond} and Lemma~\ref{lemma:affine} we get that, at least for one choice of $(b_0,b_1,b_2)\in\{0,1\}^3$, the output of $\mathscr{A}(r)$ is not a valid string (i.e., does not correspond to a possible measurement outcome of the process $\Pp_d(b_0,b_1,b_2)$). A simple counting argument then shows that there exists at least one choice of $(b_0,b_1,b_2)$ for which the original randomized protocol $\mathscr{A}$ fails to output a valid string with probability at least $1/8$.
\end{proof}
We are now ready to prove Theorem \ref{th1}.
\begin{proof}[Proof of Theorem \ref{th1}]
Theorem \ref{th:relation} implies that any classical algorithm that outputs a valid string with probability greater than $7/8$ requires a number of rounds linear in the size of the network (since $d$ is a linear function of the size of network $\Gg_d$). 

We now show how to reduce the success probability from $7/8$ to an arbitrary small value: for any constant $\varepsilon> 0$ we construct a new computational problem, which can still be solved in two rounds in the quantum setting, such that any classical algorithm solving this problem with probability at least $\varepsilon$ requires a number of rounds linear in the size of the network. Let $k$ be an integer. The problem considered is simply $k$ independent copies of the problem considered so far: the network considered has $3k(d+1)$ nodes and consists of $k$ copies of the network $\Gg_d$. Each copy receives three bits and outputs a string of $3d$ bits. The output of the whole network is correct if the strings output by each copy are all valid. This problem can obviously be solved using two rounds in the quantum setting by constructing the graph state over the whole network. Theorem \ref{th:relation} implies that for any integer $T\le d/2$, no $T$-round classical algorithm can give a correct output with probability greater than $(7/8)^k$ on all inputs, even if arbitrary prior randomness is allowed. Setting $k=\Theta(\log (1/\varepsilon))$ concludes the proof.
\end{proof}

\section{Separation for a Distribution} \label{sec:DistSep}
In this section we prove Theorem \ref{th2}. The idea is to convert the relation of the previous section into a distribution by requiring that each input is taken uniformly at random (and requiring that the three nodes with an input output their inputs as well).

Recall Process~\ref{proc:rel} in the proof of Theorem~\ref{th:relation}. There, the actions of every node of $\Gg_d$ depend only on the degree of the node, namely, whether its degree is $1$, $2$ or $3$. The same is true for the $2$-round sampling protocol in the quantum LOCAL model described below, which also uses the same network $\Gg_d$. Therefore, for notational convenience, let us assume that every node knows its global location in $\Gg_d$.

Consider the probability distribution $\distrP_d$ generated by the following $2$-round quantum protocol.
First, for each $i\in\{0,1,2\}$, the node $w_i$ chooses an unbiased random bit $b_i$.
Then Process~\ref{proc:rel} is implemented, at the end of which, as specified, nodes $u\in V$ each return one bit.  Meanwhile, the nodes $w_0,w_1,w_2$ output, respectively, $b_0,b_1,b_2$.

Theorem \ref{th2} immediately follows from the following result.
\begin{theorem}
Every $T\le d/4$ round algorithm on $\Gg_d$ in the classical LOCAL model generates a probability distribution that is at least  $1/11$ away from $\distrP_d$  in the total variation distance.
 \end{theorem}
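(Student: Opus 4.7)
The plan is to adapt the affine-function argument used in the proof of Theorem~\ref{th:relation} to the sampling setting, where $(b_0,b_1,b_2)$ is now generated by the nodes $w_0,w_1,w_2$ rather than provided as input. The key structural observation is that for any $T$-round classical protocol with $T\le d/4$ on $\Gg_d$, the three attachments $w_0,w_1,w_2$ sit at pairwise distance $d+2>2T$, so the balls $B(w_0,T)$, $B(w_1,T)$, $B(w_2,T)$ are pairwise disjoint; moreover, for every node $v_j\in V_R$ one checks $d(v_j,w_2)\ge d+2>T$, so $B(v_j,T)$ misses $B(w_2,T)$ entirely, and symmetric statements hold for the bottom and left sides.

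Let $q$ be the distribution output by some classical $T$-round protocol with $T\le d/4$, and assume for contradiction that $\|q-\distrP_d\|_{TV}<1/11$. First I would condition on the shared randomness together with the private randomness of every node outside $B(w_0,T)\cup B(w_1,T)\cup B(w_2,T)$, and let $R_0,R_1,R_2$ denote the residual (now independent) randomness inside these three balls. Under this conditioning, $b_i$ is a function of $R_i$ alone, and by the locality observation above each $x_j$ with $v_j\in V_R$ depends on at most one of $R_0,R_1$, with analogous statements for the other sides. XORing the contributions then yields a triangle decomposition: $X_R=A_R(R_0)\oplus B_R(R_1)\oplus c_R$, $X_B=B_B(R_1)\oplus C_B(R_2)\oplus c_B$, $X_L=A_L(R_0)\oplus C_L(R_2)\oplus c_L$, and $X_E=A_E(R_0)\oplus B_E(R_1)\oplus C_E(R_2)\oplus c_E$, for some functions and constants determined by the conditioning.

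Next I would apply a probabilistic version of Lemma~\ref{lemma:affine}: under the above structure, the joint law of $(b_0,b_1,b_2,X_E,X_R,X_B,X_L)$ is precisely a classical local hidden variable distribution for the GHZ-style game encoded by Proposition~\ref{proposition:cond}, with the three ``players'' being the neighborhoods of $w_0,w_1,w_2$. Adapting the counting argument that ends the proof of Theorem~\ref{th:relation}, I would show that no such distribution can simultaneously satisfy the triangle identity and all four $X_E$-parity constraints for the special inputs $(0,0,0),(0,1,1),(1,0,1),(1,1,0)$. Averaging over the outer conditioning and using the hypothesis $\|q-\distrP_d\|_{TV}<1/11$ to keep the marginal of $(b_0,b_1,b_2)$ within $1/11$ of $1/8$ at each special input, I expect to conclude $\Pr_q[\mathrm{valid}]\le 10/11$, contradicting $\Pr_q[\mathrm{valid}]\ge 1-\|q-\distrP_d\|_{TV}>10/11$.

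The main obstacle is the probabilistic lifting of Lemma~\ref{lemma:affine}. In Theorem~\ref{th:relation}, $(b_0,b_1,b_2)$ are fixed inputs and $q_E,q_R,q_B,q_L$ are literally affine $\mathbb{F}_2$-functions; here $b_i$ is an arbitrary function of $R_i$ and the $A_\bullet,B_\bullet,C_\bullet$ above are equally general. I would handle this by exploiting the independence of $R_0,R_1,R_2$ together with the $\mathbb{F}_2$-linear XOR form of $X_R,X_B,X_L,X_E$ to reduce to the deterministic affine case on an extended ``input space'' indexed by the local outputs at each corner, and extract the constant $1/11$ by carefully tracking the $1/4$ classical gap from the GHZ-type counting against the allowed deviation in the marginals of $(b_0,b_1,b_2)$.
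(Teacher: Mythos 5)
Your overall route is the same as the paper's: localize the randomness into the three balls around $w_0,w_1,w_2$, observe that each output bit can depend on at most one ball, and then run the GHZ-type counting via Proposition~\ref{proposition:cond} and Lemma~\ref{lemma:affine}. However, the step you yourself flag as ``the main obstacle'' is precisely the step the proof cannot do without, and your one-sentence plan does not resolve it. After your conditioning, $X_R=A_R(R_0)\oplus B_R(R_1)\oplus c_R$ with $A_R$ an \emph{arbitrary} function of $R_0$, whereas Lemma~\ref{lemma:affine} needs affine functions of the bits $(b_0,b_1,b_2)$; and since $b_i$ is a deterministic function of $R_i$, you cannot simply ``condition on $R_i$ except for $b_i$''. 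The paper's resolution is a concrete re-parametrization: replace the randomness $Q_i$ of the $i$-th ball by a triple $(Q_i^0,Q_i^1,B_i)$ of independent variables, where $B_i$ is Bernoulli$(p_i)$ and $Q_i^\beta$ is distributed as $Q_i$ conditioned on $b_i(Q_i)=\beta$, recover $Q_i$ as $Q_i^{B_i}$, and then make $Q_\perp$ and all the $Q_i^\beta$ globally shared. Conditioning on the global part, every node's output becomes a function of at most one single bit $B_i$ (which, starting within distance $T-1$ of $v_{di}$, travels only to distance $2T-1<d/2$), hence is automatically affine, and Lemma~\ref{lemma:affine} applies verbatim with inputs $(B_0,B_1,B_2)$. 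Your phrase ``extended input space indexed by the local outputs at each corner'' gestures at this, but without the explicit $(Q_i^0,Q_i^1,B_i)$ splitting the reduction to the affine case is not established.

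Separately, your counting does not deliver the stated constant. Bounding the marginal of the triple $(b_0,b_1,b_2)$ by $1/8-1/11=3/88$ at each of the four special inputs gives a failure probability of at least $3/88$, and $3/88<1/11$, so there is no contradiction with the assumption that the total variation distance is below $1/11$; as written, your scheme can only yield the constant $1/16$ (the solution of $1/8-c=c$). The paper instead controls each single-bit marginal $p_i=\Pr[b_i=1]$ separately: either some $p_i\notin[5/11,6/11]$, in which case that marginal alone already witnesses the distance, or all $p_i\in[5/11,6/11]$ and, by the independence of $B_0,B_1,B_2$, every triple occurs with probability at least $(5/11)^3>1/11$. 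You should adopt this per-coordinate bookkeeping; note that the independence of the $B_i$ from everything else is a further payoff of the re-parametrization described above.
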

\begin{proof} 
The proof proceeds as follows. Starting from the classical LOCAL model, we present a series of increasingly powerful models on the network $\Gg_d$. Each model receives no input and returns one bit per node. Then we show that the last, the most powerful among these models cannot generate a probability distribution that is closer than distance $1/11$ away from $\distrP_d$.

Consider the classical LOCAL model on the network $\Gg_d$.
We assume that the randomness of each node stems from a finite random bit string that it receives as an input, and all further operations of the node are deterministic (see Appendix~\ref{sec:FiniteRand} for technical details).
We now present a series of steps, where each step either strengthens the model or maintains its power while making it easier to analyze.
\begin{enumerate}
\item
We assume that all the nodes know their location in the global topology.
\item
We allow certain nodes to share randomness. In particular, for each $i\in\{0,1,2\}$, let $V_i$ be the set consisting of $w_{i}$ and all the nodes $u\in V$ at distance at most $T$ away from $w_{i}$.
And let $V_{\perp}=V\setminus (V_0\cup V_1\cup V_2)$.
We assume that, for $i\in\{0,1,2,\perp\}$, all nodes within $V_i$ share randomness, namely, they all start with the same random string $Q_i$, which we think of as a random variable.
\end{enumerate}

Here it is worth pausing the steps to note that, in a $T$-round protocol, the bit $b_i$ output by the node $w_{i}$ depends only on $Q_i$, thus we may write it as a function $b_i(Q_i)$. Let $p_i$ be the probability that $b_i=1$.
If there exists $i\in\{0,1,2\}$ with $p_i\notin[5/11,6/11]$, then the marginal distribution over $b_i$ is already at total variation distance greater than $1/11$ away from the corresponding marginal distribution in $\distrP_d$, and the whole distributions ($\distrP_d$ and the one generated by the classical protocol) can be only even farther apart. Thus let us assume that $p_i\in[5/11,6/11]$ for all $i\in\{0,1,2\}$. Since $Q_0,Q_1,Q_2$ are independent, each $(b_0,b_1,b_2)\in\{0,1\}^3$ is output with probability at least $(5/11)^3>1/11$.

\begin{enumerate}[resume]
\item
For $i\in\{0,1,2\}$, let $B_i$ be a random variable that takes value $1$ with probability $p_i$ and value $0$ with probability $1-p_i$. For both $\beta\in\{0,1\}$, let $Q_i^\beta$ be a random variable
 that equals each value $q$ of $Q_i$ such that $b_i(q)=\beta$ with probability $\Pr[Q_i=q]/\Pr[B_i=\beta]$.
We replace the shared randomness $Q_i$ by $(Q_i^0,  Q_i^1, B_i)$---each of the three variables being independent---with an extra requirement that the node $w_{i}$ always outputs $B_i$.
This is clearly without loss of power, because we can recover $Q_i$ as $Q_i^{B_i}$, for which $b_i(Q_i)=B_i$.
\item \label{model:powerful}
We share all the randomness except $B_0,B_1,B_2$ among all the nodes. More precisely, we assume that all nodes start with the randomness
$r=(Q_\perp,Q_0^0,Q_0^1,Q_1^0,Q_1^1,Q_2^0,Q_2^1)$.
 In addition, for each $i\in\{0,1,2\}$, nodes in $V_i$ start with an additional random bit $B_i$ and we preserve the requirement that $w_{i}$ must output $B_i$.
\end{enumerate}

Now we need to show that the final model cannot generate a probability distribution that is closer than total variation distance $1/11$ away from $\distrP_d$. Note that, at the beginning of the protocol, the value $B_i$ is only known to nodes at distance at most $T-1$ away from $v_{di}$, and, after the protocol, it can be known only to nodes at distance $2T-1<d/2$ away from $v_{di}$. In particular, at the end of the protocol, each node of the network will know no more than one of the values $B_0,B_1,B_2$. All other communicated information is useless, as, aside from $B_0,B_1,B_2$, all other randomness is global.

The remainder of the proof is almost equivalent to that of the classical lower bound in Theorem~\ref {th:relation}, with the sole difference of the counting argument: instead of each choice of $(b_0,b_1,b_2)$ being given with probability exactly $1/8$, now each choice of $(b_0,b_1,b_2)$ is given with probability at least $1/11$.
%
%
\end{proof}

\section*{Acknowledgments}
The authors are grateful to Richard Cleve, Hirotada Kobayashi and Tomoyuki Morimae for helpful comments and discussions. 
FLG was partially supported by the JSPS KAKENHI grants No.~15H01677, No.~16H01705 and No.~16H05853. HN was partially supported by the JSPS KAKENHI grants No.~26247016, No.~16H01705 and No.~16K00015.
AR was partially supported by the Singapore Ministry of Education and the National Research Foundation under grant R-710-000-012-135. 
Part of this work was done while AR was visiting Kyoto University, and AR would like to thank FLG for hospitality.

\appendix

\section{Technical Definition of the Classical LOCAL Model}\label{sec:clasLocal}

We formalize a $T$-round classical LOCAL network as follows.
We model each node $u\in V$ as a special Turing machine with a work
tape, a message tape $M_{u,v}$ for each neighbor $v\in N(u)$, and a
read-only random tape. Initially, the work tape contains the input of $u$
(if there is any), the message tapes are blank, and the random tape is initialized to unbiased random
bits, independent from one another and from the content of other tapes.

The set of states of each Turing machine is a disjoint union $S_0 \cup
\ldots \cup S_T \cup \{q_{\mathrm{fin}}\}$, with one designated
``starting'' state $q_t\in S_t$ for each $t\in\{0,\ldots,T\}$. The state
$q_{\mathrm{fin}}$ is the final state, and, for convenience, we define
$q_{T+1}=q_{\mathrm{fin}}$. The Turing machine starts in $q_0$, and,
 for every $t\in\{0,\ldots,T\}$,
 we require that a state in $S_t$ can only transition into
a state in $S_t \cup \{q_{t+1}\}$. In addition, we require that the
transition from $S_t$ to $q_{t+1}$ occurs with probability $1$, regardless of
the content of the work and the message tapes when the Turing
machine first enters $q_t$.

We formalize the exchange of messages as follows.
In round $t\in\{0,\ldots,T\}$, all Turing machines start in their
corresponding state $q_t$ and run until they all have reached their
corresponding state $q_{t+1}$. Then, if $t<T$, the configuration of
message tapes $M_{u,v}$ and $M_{v,u}$ are swapped for every $\{u,v\}\in E$,
and all Turing machines start round $t+1$. Otherwise, if $t=T$, the work tape of $u\in V$ contains the output of that node.

\subsection{Restriction to finite and initial randomness} \label{sec:FiniteRand}

In the proof of Theorem~\ref{th2}, we are essentially assuming that the random tapes are of finite length. That is without loss of generality because, given any protocol on a finite network and any $\epsilon>0$, there exists a positive integer $L$ such that, with probability at least $1-\epsilon$, no Turing machine of the protocol ever visits more than $L$ cells of its random tape.
Thus, since $\epsilon$ can be chosen arbitrarily small, we can assume all random tapes to be of some finite length $L$. Via similar reasoning, we can assume that all the randomness is provided at the beginning of the protocol, instead of fresh randomness being provided at each round.

\section{The Case of Functions}\label{sec:functions}
A well-known fact in classical distributed computing is that randomness does not help when computing functions in the LOCAL model. In this appendix we show that this argument extends to the quantum case: we prove that any $T$-round quantum protocol computing a function can be converted into a $T$-round classical protocol computing the same function.

Suppose, in the LOCAL model, we have a $T$-round quantum protocol $\Pp$ with the network structure given by a graph $G=(V,E)$.
And suppose that $\Pp$ computes some function $f\colon D \rightarrow \Sigma^{|V|}$, where $\Sigma$ is the input-output alphabet and $D\subseteq\Sigma^{|V|}$.
More precisely, we assume that, for every input $x\in D$, with probability strictly larger than $1/2$ all nodes $u\in V$ output $f(x)_u$.

For a node $u\in V$ and an integer $i\ge 0$, let the \emph{$i$-neighborhood} of $u$, denoted $N_i(u)$, be the set of nodes in $V$ at distance at most $i$ away from $u\in V$. 
And, for an input $x\in D$, let $x_{u,i}$ denote the restriction of $x$ to $N_i(u)$.

\begin{claim}\label{clm:TNeighbor}
For every $x\in D$ and every $u\in V$, the output of node $u$ is a random variable $O_u(x)$ whose probability distribution depends only on $x_{u,T}$.
(This holds true even in a more powerful model where nodes are allowed to share any entanglement prior to receiving the input.)
\end{claim}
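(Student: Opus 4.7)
The plan is to use a quantum light-cone argument carried out at the level of reduced density matrices, relying only on the fact that tracing out a register is unaffected by a CPTP map that acts solely on that register. First, I would formalize the $T$-round quantum protocol as a fixed sequence of operations on the joint quantum system consisting of the prior entangled state together with any ancillas each node creates. For each (node, round) pair $(v,t)$, the local operation of $v$ at round $t$ is a CPTP map acting only on registers currently held by $v$, followed (for $t<T$) by a swap of the message registers between $v$ and each neighbor. Crucially, the bookkeeping of which physical register sits at which node at which time is determined by the network topology and the protocol, \emph{not} by the inputs.

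Next I would define the causal cone of $u$'s output as $\mathcal{L} = \{(v,t) : \mathrm{dist}(v,u) \le T-t\}$. A standard causality observation then says: if $(v,t) \notin \mathcal{L}$, any register touched by $v$'s round-$t$ operation cannot reach $u$ by the end of round $T$, because too few message exchanges remain to transport it there. Hence out-of-cone operations act on a register subsystem disjoint from every register that is ever held by a node in $N_{T-s}(u)$ at any round $s$, and in particular disjoint from $u$'s final output register.

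The key quantum-mechanical step is then the identity $\mathrm{Tr}_A\bigl[\mathcal{E}_A(\rho_{AB})\bigr] = \mathrm{Tr}_A[\rho_{AB}]$ for any CPTP map $\mathcal{E}_A$ acting only on a register $A$ disjoint from $B$. This lets me delete every out-of-cone operation from the protocol without changing the reduced state on $u$'s output register, even though the prior entangled state may correlate in-cone and out-of-cone registers. After the deletions, the only operations that remain are labelled by $(v,t)$ with $v \in N_{T-t}(u) \subseteq N_T(u)$, and each such operation depends on $x$ only through $x_v$ for some $v \in N_T(u)$, hence only through $x_{u,T}$. Therefore the reduced state on $u$'s output register, and so the distribution of the measurement outcome $O_u(x)$, depends only on $x_{u,T}$ together with the fixed initial entangled state.

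The main obstacle, and in fact the only nontrivial piece of the argument, is the bookkeeping establishing that the operations performed by nodes outside $N_T(u)$ really do act on registers disjoint from those eventually held by $u$; the register-to-node-over-time assignment must be pinned down as a function of the graph and the protocol alone, independent of $x$. Once that is done, the trace identity does all the work, and arbitrary prior entanglement is handled automatically because it is part of a fixed, input-independent initial state.
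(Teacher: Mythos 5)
Your proposal is correct in substance but formalizes the light-cone argument differently from the paper. The paper runs a \emph{forward} induction on the reduced density matrix $\rho_t$ of the shrinking neighborhood $N_{T-t}(u)$: it shows $\rho_0$ depends only on $x_{u,T}$ (the prior entanglement being a fixed, input-independent state), and that $\rho_{t+1}$ is determined by $\rho_t$ because nodes of $N_{T-t-1}(u)$ receive round-$(t+1)$ messages only from within $N_{T-t}(u)$; no operation is ever deleted and no commutation bookkeeping is needed. You instead perform circuit surgery: delete every out-of-cone local map and invoke $\mathrm{Tr}_A\bigl[\mathcal{E}_A(\rho_{AB})\bigr]=\mathrm{Tr}_A[\rho_{AB}]$. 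That route works, but one intermediate claim is stated too strongly: it is \emph{not} true that out-of-cone operations act on registers disjoint from every register ever held by an in-cone node, since a register can start inside the cone and later be shipped outward and be acted on there. The invariant you actually need (and which suffices) is the forward-absorbing one: any register touched by an operation at $(v,t)$ with $\mathrm{dist}(v,u)>T-t$ is, at every later round $s\ge t$, held by a node at distance greater than $T-s$ from $u$; hence it is never again touched by an in-cone operation and never reaches $u$, so the out-of-cone maps can be commuted past all remaining operations and discarded by the partial-trace identity. With that correction your deletion argument goes through, and prior entanglement is handled for the same reason as in the paper's proof, namely that the initial state is fixed and input-independent. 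The paper's induction is slightly more economical because it never reorders operations; yours makes the causal structure more explicit at the cost of the register bookkeeping you yourself flag as the crux.
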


Since the quantum protocol $\Pp$ computes $f$, for every $x\in D$ and every $u\in V$, the random variable $O_u(x)$ takes the value $f(x)_u$ with probability larger than $1/2$. 
Now consider the following classical $T$-round \emph{deterministic} protocol
: each node $u\in V$ collects the inputs from nodes in its $T$-neighborhood, which suffices to locally reproduce $O_u(x)$, and then it outputs the most probable value of $O_u(x)$.
The correctness of this protocol follows from Claim~\ref{clm:TNeighbor}.

\begin{proof}[Proof of Claim~\ref{clm:TNeighbor}]
For $t\in\{0,1,\ldots,T\}$, let $\rho_t$ be the reduced density state of the $(T-t)$-neighborhood of $u$ after $t$ rounds of communication. By induction, we argue that the states $\rho_0,\rho_1,\ldots,\rho_T$---which we can think of forming the past light cone of $\rho_T$---all depend only on $x_{u,T}$, and no values of $x$ outside $N_T(u)$.
As the base case, it clearly holds for $\rho_0$ (even in the presence of prior entanglement).
For the inductive step, let us assume that, for some $t\ge 0$, $\rho_t$ depends only on $x_{u,T}$.
Then the reduced density state of the $(T-t)$-neighborhood of $u$ just before the $(t+1)$-th round of communication depends only on $x_{u,T}$.
In that round of communication, nodes in the $(T-t-1)$-neighborhood of $u$ receive messages only from within the $(T-t)$-neighbourhood of $u$, 
and thus the state $\rho_{t+1}$ also depends only on $x_{u,T}$.
When $\rho_T$, the final state of the node $u$, is measured, the probabilities of various outcomes are completely determined by $\rho_T$. Hence, these probabilities depend only on $x_{u,T}$.
\end{proof}

\end{document}